\newcommand{\pc}{\mathbf{P}}
\newcommand{\oc}{\mathbf{O}}
\newcommand{\pp}{\mathbb{P}}
\newcommand{\cc}{\mathbb{C}}
\newtheorem{theorem}{Theorem}
\newtheorem{definition}{Definition}
\newtheorem{proposition}{Proposition}
\newtheorem{example}{Example}
\title{Rough Contact in General Rough Mereology}
\author{\textbf{A. Mani}}
\institute{International Rough Set Society\\
9/1B, Jatin Bagchi Road\\
Kolkata(Calcutta)-700029, India\\
Email: \email{$a.mani.cms@gmail.com$}\\
Homepage: \url{http://www.logicamani.in}}
\authorrunning{A Mani}
\titlerunning{General Rough Mereology}
\begin{document}

\maketitle

\begin{abstract}
Theories of rough mereology have originated from diverse semantic considerations from contexts relating to study of databases, to human reasoning. These ideas of origin, especially in the latter context, are intensely complex. In this research, concepts of rough contact relations are introduced and rough mereologies are situated in relation to general spatial mereology by the present author. These considerations are restricted to her rough mereologies that seek to avoid contamination.

\end{abstract}
\section{Introduction}
\label{sect:introduction}
In a semantic domain, an object is \emph{crisp} if it can be discerned perfectly. Rough set theory (RST) is a formal approach to vagueness (typically concerning non-crisp or rough objects) that is studied in multiple semantic domains -- the motive can vary from insights into databases to human reasoning. Naturally this results in a number of not necessarily equivalent semantic approaches. They \cite{am240,am501,am3930,ppm2,yy2015} correspond to distinct methods of reasoning about rough objects and related approximation operations, and possibly other types of objects (and operations). In general, these reasoning methods are not necessarily visual or spatial in any sense. A part of such reasoning has connections with popular modal logic because the lower approximation $x^l$ of an object $x$ is interpretible as the definite part of $x$ and the upper approximation $x^u$ as a \emph{possible definition} of it.

Mereological aspects in the context of modeling human reasoning are relatively more complex than those relating to databases, and fault tolerant systems. If wild numeric assumptions are avoided, then rough reasoning about databases can be intuitive and rational. This fragment is of natural interest for transforming models of human reasoning couched in ideas of approximation, contradiction, inconsistency, and relative atomicity. 

A general overview of mereology can be found in \cite{rgac15}. Predicates like \emph{is part of} ($\pc$), \emph{overlaps} ($\oc$), \emph{is in contact with} ($\cc$) and others may be of a basic or a derived nature (that is defined in terms of basic predicates) are considered in such studies. Classical extensional mereology (CEM) proceeds from a single primitive concept of transitive parthood, and Kuratowski general extensional mereology (KGEMT, as defined in \cite{acav2019}) takes transitive $\pc$ and $\cc$ as primitives.  

At least two approaches to rough mereology \cite{am240,am3930,am9969,lp2019} are known in the literature on rough sets. They differ substantially from the mereologies mentioned earlier. The mereology of \cite{ps} is based on the degree of inclusion of an object in another (as in '\emph{ $a$ is included in $b$ to the degree $r$}')\cite{ag2009}, and consequently the basic parthood predicate is ternary and non-transitive for fixed values of the degree.  In \cite{lp2019}, spatial representation aspects of the membership degree/mass based rough mereology (MDRM) are considered. Interestingly the role of contact-like predicates is understood from a CEM perspective. The concerns of rough mereologies due to the present author in \cite{am240,am3930,am9969} (collectively referred to as RMCA: Rough Mereology with Contamination Avoidance) are founded in ideas of operators that can approximate, granularity, contamination avoidance, and rough dependence. It is possible to do RST from an abstract operator theoretic perspective without reference to granules (see \cite{gcd2018}) -- related mereology form a subset of the present approach. Negations and ortho pairs play a dominant role in the theory, and guarantee improved semantics. 

\emph{In this research, the nature of mereological perspectives in general RST is clarified and the problem of defining minimal frameworks for RMCA and rough contact relations (for a spatial mereology) are investigated}.

\section{Background and Terminology}

A \emph{general approximation space} is a relational system of the form $S\,=\,\left\langle \underline{S},\,R\right\rangle $  with $\underline{S}$ being a set (in ZFC) and $R$ a binary relation on it. \textsf{The symbol $S$ will be used for $\underline{S}$ if such usage is clear from the context}. By \emph{relation based RST} (\textsf{RBRST}), will be meant any RST over such general approximation spaces. If $R$ is replaced by a covering $\mathcal{C}$ of $\underline{S}$, then the pair is said to be a \emph{covering approximation space} (CAS).

In classical rough sets, $R$ is an equivalence. On the power set $\wp (S)$, lower and upper
approximations of a subset $A\in \wp (S)$ operators, apart from the usual Boolean operations, are defined as per: 
\[A^l = \bigcup_{[x]\subseteq A} [x] \; ; \; A^{u} = \bigcup_{[x]\cap A\neq \varnothing } [x],\,\]
with $[x]$ being the equivalence class generated by $x\in S$. If $A, B\in \wp (S)$, then $A$ is said to be \emph{roughly included} in $B$ $(A\sqsubseteq B)$ if and only if $A^l \subseteq B^l$ and $A^u\subseteq B^u$. $A$ is roughly equal to $B$ ($A\approx B$) if and only if $A\sqsubseteq B$ and $B\sqsubseteq A$. The positive, negative and boundary region determined by a subset $A$ are respectively $A^l$, $A^{uc}$ and $A^{u}\setminus A^l$ respectively. $\sqsubseteq$ is an example of a transitive and reflexive parthood relation in $\wp(S)$.

Boolean algebra with approximation operators constitutes a semantics for classical RST and RBRST (though not satisfactory). More generally it is possible to replace $\wp (S)$ by some set with a part-hood relation and some approximation operators defined on it \cite{am240}. The associated semantic domain in the sense of a collection of restrictions on possible objects, predicates, constants, functions and low level operations on those is referred to as the classical semantic domain for general RST. In contrast, sets of roughly equivalent or relatively indiscernible objects can be associated with
 a \emph{rough semantic domain}. In the literature \cite{am501}, many others including hybrid semantic domains have been used .

Data analysis maybe intrusive (invasive) or non-intrusive relative to the assumptions made on the dataset used in question \cite{gdu}. Non-invasive data analysis was defined in a vague way in \cite{gdu} as one that 

Is based on the idea of \emph{data speaking for themselves},

Uses minimal model assumptions by drawing all parameters from the observed data, and

Admits ignorance when no conclusion can be drawn from the data at hand.

Key procedures that have been deemed to be non-invasive in \cite{gdu} include data discretization (or horizontal compression), randomization procedures, reducts of various kinds within rough set data analysis, and rule discovery with the aid of maximum entropy principles. 

The concept of \emph{contamination} was introduced in \cite{am99} and explored in \cite{am240,am501,am9114} by the present author. It maybe viewed as a distinct minimalist approach that takes the semantic domains involved into account and has the potential to encompass the three principles of non-intrusive analysis. Some sources of contamination are those about distribution of variables, introduction of assumptions valid in one semantic domain into another by oversight, numeric functions used in rough sets (and soft computing in general) and fuzzy representation of linguistic hedges. The contamination problem in simplified terms is that of reducing artificial constructs in \textsf{RST} towards capturing essential rough reasoning at that level. Reduction of contamination is relevant in all model/algorithm building contexts of formal approaches to vagueness. 

Granules or information granules are often the minimal discernible concepts that can be used to construct all relatively crisp complex concepts in a vague reasoning context. Such constructions typically depend on a substantial amount of assumptions made by the theoretical approach employed \cite{am240,am501,am9411,tyl}. In the present author's axiomatic approach to granularity \cite{am240,am9114,am9411,am501,am3930}, fundamental ideas of non-intrusive data analysis have been critically examined and methods for reducing contamination of data (through external assumptions) have been proposed. The need to avoid over-simplistic constructs like rough inclusion functions have been stressed in the approach by her. New granular measures that are compatible with rough domains of reasoning, and granular correspondences that avoid measures have also been invented in the papers.

\subsection{Contact Relations}
A \emph{precontact relation} $\cc $ over a lattice $L$ is a binary relation that satisfies C1, C6, and C7, while a contact relation over a bounded distributive lattice is one that satisfies C1, C2, C3, C4 and C5. These are of fundamental relevance in spatial mereology.
\begin{align*}
\cc ab \longrightarrow 0 < a \, \&\, 0 < b \tag{C1}\\
\cc a (b\vee e) \leftrightarrow \cc ab \text{ or } \cc ae   \tag{C6}\\
\cc (b\vee e)a \leftrightarrow \cc ba \text{ or } \cc ea   \tag{C7}\\
\cc ab \longrightarrow \cc ba \tag{C2}\\
\cc ab \,\& \, b\leq e \longrightarrow \cc ae  \tag{C3}\\
\cc a (b\vee e) \longrightarrow \cc ab \text{ or } \cc ae \tag{C4}\\
0 < a\wedge b \longrightarrow \cc ab    \tag{C5}
\end{align*}

A \emph{partial algebra} $P$ is a tuple of the form $\left\langle\underline{P},\,f_{1},\,f_{2},\,\ldots ,\, f_{n}, (r_{1},\,\ldots ,\,r_{n} )\right\rangle$ with $\underline{P}$ being a set, $f_{i}$'s being partial function symbols of arity $r_{i}$. The interpretation of $f_{i}$ on the set $\underline{P}$ should be denoted by $f_{i}^{\underline{P}}$, but the superscript will be dropped in this paper as the application contexts are simple enough. If predicate symbols enter into the signature, then $P$ is termed a \emph{partial algebraic system}.   

In a partial algebra, for two terms $s,\,t$, $s\,\stackrel{\omega}{=}\,t$ shall mean, if both sides are defined then the two terms are equal (the quantification is implicit). $s\,\stackrel{\omega ^*}{=}\,t$ shall mean if either side is defined, then the other is and the two sides are equal (the quantification is implicit).

\section{Variants of Granular Operator Spaces}

Granular operator spaces and related variants can be directly constructed from records of human reasoning, databases or from partial semantics of general rough sets. They are mathematically accessible powerful abstractions for handling semantic questions, formulation of semantics and the inverse problem. As many as six variants of such spaces have been defined by the present author - all these can be viewed as special cases of a set theoretic and a relation-theoretic abstraction with abstract operations from a category-theory perspective. Strictly speaking, they are partial algebraic systems, the '\emph{space}' is because of mathematical usage norms. 

In a \emph{high general granular operator space}, introduced below, aggregation/co-aggregation operations ($\vee, \,\wedge$) are conceptually separated from the binary parthood relation ($\pc$), and a basic partial order ($\leq$). In real-life information processing, it often happens that many instances of aggregations (disjunctions), co-aggregation (conjunctions) and implications are ignored because of laziness or incompatibility -- this justifies the use of partial operations.

\begin{definition}\label{gfsg}

A \emph{High General Granular Operator Space} (\textsf{GGS}) $\mathbb{S}$ shall be a partial algebraic system  of the form $\mathbb{S} \, =\, \left\langle \underline{\mathbb{S}}, \gamma, l , u, \pc, \leq , \vee,  \wedge, \bot, \top \right\rangle$ with $\underline{\mathbb{S}}$ being a set, $\gamma$ being a unary predicate that determines $\mathcal{G}$ (by the condition $\gamma x$ if and only if $x\in \mathcal{G}$) 
an \emph{admissible granulation}(defined below) for $\mathbb{S}$ and $l, u$ being operators $:\underline{\mathbb{S}}\longmapsto \underline{\mathbb{S}}$ satisfying the following ($\underline{\mathbb{S}}$ is replaced with $\mathbb{S}$ if clear from the context. $\vee$ and $\wedge$ are idempotent partial operations and $\pc$ is a binary predicate. Further $\gamma x$ will be replaced by $x \in \mathcal{G}$ for convenience.):

\begin{align*}
(\forall a, b) a\vee b \stackrel{w}{=} b\vee a \;;\; (\forall a, b) a\wedge b \stackrel{w}{=} b\wedge a \\
(\forall a, b) (a\vee b) \wedge a \stackrel{w}{=} a \; ;\; (\forall a, b) (a\wedge b) \vee a \stackrel{w}{=} a\\
(\forall a, b, c) (a\wedge b) \vee c \stackrel{w}{=} (a\vee c) \wedge (b\vee c)\\
(\forall a, b, c) (a\vee b) \wedge c \stackrel{w}{=} (a\wedge c) \vee  (b\wedge c)\\
(\forall a, b) (a\leq b \leftrightarrow a\vee b = b \,\leftrightarrow\, a\wedge b = a  )\\
(\forall a \in \mathbb{S})\,  \pc a^l  a\,\&\,a^{ll}\, =\,a^l \,\&\, \pc a^{u}  a^{uu}  \\
(\forall a, b \in \mathbb{S}) (\pc a b \longrightarrow \pc a^l b^l \,\&\,\pc a^u  b^u)\\
\bot^l\, =\, \bot \,\&\, \bot^u\, =\, \bot \,\&\, \pc \top^{l} \top \,\&\,  \pc \top^{u} \top \\
(\forall a \in \mathbb{S})\, \pc \bot a \,\&\, \pc a \top
\end{align*}

Let $\pp$ stand for proper parthood, defined via $\pp ab$ if and only if $\pc ab \,\&\,\neg \pc ba$). A granulation is said to be admissible if there exists a term operation $t$ formed from the weak lattice operations such that the following three conditions hold:
\begin{align*}
(\forall x \exists
x_{1},\ldots x_{r}\in \mathcal{G})\, t(x_{1},\,x_{2}, \ldots \,x_{r})=x^{l} \\
\tag{Weak RA, WRA} \mathrm{and}\: (\forall x)\,(\exists
x_{1},\,\ldots\,x_{r}\in \mathcal{G})\,t(x_{1},\,x_{2}, \ldots \,x_{r}) =
x^{u},\\
\tag{Lower Stability, LS}{(\forall a \in
\mathcal{G})(\forall {x\in \underline{\mathbb{S}}) })\, ( \pc ax\,\longrightarrow\, \pc ax^{l}),}\\
\tag{Full Underlap, FU}{(\forall
x,\,a \in\mathcal{G})(\exists
z\in \underline{\mathbb{S}}) )\, \pp xz,\,\&\,\pp az\,\&\,z^{l}\, =\,z^{u}\, =\,z,}
\end{align*}
\end{definition}
\emph{The conditions defining admissible granulations mean that every approximation is somehow representable by granules in a set theoretic way, that every granule coincides with its lower approximation (granules are lower definite), and that all pairs of distinct granules are contained in definite objects (those that coincide with their own lower and upper approximations).} Special cases of the above are defined next.

\begin{definition}
\begin{itemize}
\item {In a \textsf{GGS}, if the parthood is defined by $\pc ab$ if and only if $a \leq b$ then the \textsf{GGS} is said to be a \emph{high granular operator space} \textsf{GS}.}
\item {A \emph{higher granular operator space} (\textsf{HGOS}) $\mathbb{S}$ is a \textsf{GS} in which the lattice operations are total.}
\item {In a higher granular operator space, if the lattice operations are set theoretic union and intersection, then the \textsf{HGOS} will be said to be a \emph{set HGOS}. }
\end{itemize}
\end{definition}

\begin{example}
 
A set HGOS is intended to capture contexts where all objects are described by sets of attributes with related valuations (that is their properties). So objects can be associated with sets of properties (including labels possibly). A more explicit terminology for the concept, may be \emph{power set derived HGOS}(that captures the intent that subsets of the set of all properties are under consideration here). Admittedly, the construction or specification of such a power set is not necessary. In a HGOS, such set of sets of properties need not be the starting point.
 
The difference between a HGOS and a set HGOS at the practical level can be interpreted at different levels of complexity. Suppose that the properties associated with a familiar object like a cast iron frying pan are known to a person $X$, then it is possible to associate a set of properties with valuations that are sufficient to define it. If all objects in the context are definable to a \emph{sufficient level}, then it would be possible for $X$ to associate a set HGOS (provided the required aspects of approximation and order are specifiable). 

It may not be possible to associate a set of properties with the same frying pan in a number of scenarios. For example, another person may simply be able to assign a label to it, and be unsure about its composition or purpose. Still the person may be able to indicate that another fryng pan is an approximation of the original frying pan. In this situation, it is more appropriate to regard the labeled frying pan as an element of a HGOS. 

A nominalist position together with a collectivization property can also lead to HGOS that is not a set HGOS.
\end{example}

\begin{definition}
An element $x\in\mathbb{S}$ is said to be \emph{lower definite} (resp. \emph{upper definite}) if and only if $x^l\, =\,x$ (resp. $x^u\, =\,x$) and \emph{definite}, when it is both lower and upper definite. $x\in \mathbb{S}$ is also said to be \emph{weakly upper definite} (resp \emph{weakly definite}) if and only if $ x^u\, =\,x^{uu} $ (resp $ x^u\, =\,x^{uu} \,\&\, x^l =x$ ). Any one of these five concepts may be chosen as a concept of \emph{crispness}. 
\end{definition}

In granular operator spaces and generalizations thereof, it is possibly easier to express singletons and the concept of rough membership functions can be generalized to these from a granular perspective. For details see  \cite{am501,am9114}. Every granular operator space can be transformed to a higher granular operator space, but to speak of this in a rigorous way, it is necessary to define related morphisms and categories\cite{am501}.
 
\begin{proposition}
Every HGOS is a GS, and every GS is a GGS
\end{proposition}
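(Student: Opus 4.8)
The plan is to unwind the three definitions and check that each successive specialization satisfies the axioms of the weaker structure, essentially verifying that nothing is lost when the extra hypotheses are imposed. I would organize the argument as two implications: first that every \textsf{HGOS} is a \textsf{GS}, and then that every \textsf{GS} is a \textsf{GGS}.

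For the second implication, which I would actually dispatch first since it is nearly trivial, let $\mathbb{S}$ be a \textsf{GS}. By definition a \textsf{GS} \emph{is} a \textsf{GGS} in which the parthood relation happens to be given by $\pc ab \leftrightarrow a\leq b$; so the underlying partial algebraic system is already a \textsf{GGS} and there is nothing to prove beyond noting that the defining clause ``$\pc ab$ iff $a\leq b$'' is an additional constraint, not a weakening. The only point worth a sentence is that the \textsf{GGS} axioms mentioning $\pc$ (namely $\pc a^l a$, the monotonicity $\pc ab \to \pc a^l b^l \,\&\, \pc a^u b^u$, $\pc \top^l \top$, $\pc \top^u \top$, and $\pc \bot a \,\&\, \pc a\top$) together with the admissibility conditions (LS and FU, which also refer to $\pc$ and $\pp$) are all part of what it means to be a \textsf{GS} in the first place, so they transfer verbatim.

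For the first implication, let $\mathbb{S}$ be an \textsf{HGOS}. Here the extra feature is that $\vee$ and $\wedge$ are \emph{total} rather than merely partial, and I must check this is compatible with being a \textsf{GS}. The weak equations $\stackrel{w}{=}$ defining a \textsf{GGS} have the form ``if both sides are defined, they are equal''; when the operations are total this simply becomes ordinary equational identity, so commutativity, the two absorption laws, and the two distributivity laws all hold as genuine identities, which is strictly stronger than the weak versions required. The lattice-order compatibility clause $a\leq b \leftrightarrow a\vee b = b \leftrightarrow a\wedge b = a$ is unaffected. All the clauses involving $l$, $u$, $\pc$, $\bot$, $\top$ are inherited unchanged, as are the admissibility conditions on $\gamma$ (WRA, LS, FU); note in particular that the term operation $t$ witnessing WRA is still a term in the — now total — lattice operations, so admissibility is preserved. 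Hence an \textsf{HGOS} satisfies every axiom of a \textsf{GS}.

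I do not expect a genuine obstacle here: the statement is essentially a bookkeeping lemma asserting that the hierarchy of definitions is cumulative. The one place demanding care is the passage from partial to total operations in the weak identities — one must be sure that replacing $\stackrel{w}{=}$ by $=$ is a strengthening and never forces a side-condition that a \textsf{GGS} might fail; since totality only ever \emph{adds} defined values and the weak identities are universally quantified conditionals on definedness, this is fine. A second minor point is to confirm that in a \textsf{GS} the derived relation $\pp$ (proper parthood) used in FU coincides whether computed from the abstract $\pc$ or from $\leq$; but this is immediate from the defining equivalence $\pc ab \leftrightarrow a\leq b$. With these remarks the proposition follows.
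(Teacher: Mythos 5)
Your proof is correct and takes the only natural route: the definitions are cumulative (a \textsf{GS} is by definition a \textsf{GGS} with $\pc$ given by $\leq$, and an \textsf{HGOS} is by definition a \textsf{GS} with total lattice operations), which is exactly why the paper states the proposition without proof. Your extra care about weak identities becoming genuine identities under totality is sound but not strictly needed, since an \textsf{HGOS} is already declared to be a \textsf{GS} in its definition.
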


\paragraph{Rough Objects}
A rough object cannot be known exactly in a rough semantic domain, but can be represented in a number of ways.  The following representations of \emph{rough objects} have been either considered in the literature (see \cite{am501}) or are reasonable concepts that work in the absence of a negation-like operation or relation: {\emph{any pair of definite elements} of the form $(a , b)$ satisfying $a < b $}, \emph{any distinct pair of elements} of the form $(x^l ,x^u)$, \emph{intervals of the form} $(x^l, x^u)$, and \emph{interval of the form} $(a, b)$ satisfying $a\leq b$ with $a, b$ being definite elements.

\section{Extended Example, Fusion}

The difference between fusion ($\mathfrak{F}\subseteq S \times \wp (S)$) and sum ($\Sigma \subseteq S \times \wp (S)$) predicates is relevant in RMCA. Avoiding issues relating to existence, the predicates can be defined as  

\begin{align}
\Sigma a B \stackrel{\vartriangle}{\leftrightarrow} B\subseteq \pc(a) \subseteq \bigcup \{\oc (x): \, x\in B\} \tag{msum}\\
\mathfrak{F} a B \stackrel{\vartriangle}{\leftrightarrow} \oc (a) = \bigcup \{\oc (x): \, x\in B\}   \tag{fusion}
\end{align}

For a set $S$ endowed with a binaryparthood relation $\pc$, the set of upper and lower bounds of a subset $X$ are defined by 
\begin{align*}
UB(X) = \{a:\, (\forall x\in X) \pc xa\}   \tag{Upper Bounds}\\
LB(X) = \{a:\, (\forall x\in X) \pc ax\}   \tag{Lower Bounds}
\end{align*}
$S$ is said to be \emph{separative} if and only if SSP (strong supplementation) holds.
\[(\forall a b )(\neg \pc ab \longrightarrow (\exists z) (\pc za \,\&\, \neg \pc zb \,\&\, \neg \pc bz))   \tag{SSP}\]
\begin{theorem}[\cite{rafal2013}]
All of the following hold:
\begin{itemize}
\item {If $\pc$ is reflexive, then  a fusion of $B$ is a mereological sum if it is an upper bound of $B$: \[(\forall a\in S)(\forall B\in \wp (S))(B\subseteq \pc (a) \,\&\, \mathfrak{F} a B \longrightarrow \Sigma a B)\]}
\item {If $\pc$ is transitive and separative then every sum is a fusion and conversely.}
\item {If $\pc$ is transitive and separative then every binary fusion is a binary sum}
\end{itemize}
\end{theorem}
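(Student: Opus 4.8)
The plan is to prove the three items separately, since each leans on a different portion of the hypotheses. Throughout I take overlap to be the usual derived predicate, $\oc ab \leftrightarrow (\exists z)(\pc za \,\&\, \pc zb)$, so that $\oc(a) = \{y : \oc ya\}$ and $\pc(a) = \{y : \pc ya\}$; I also record at the outset that separativity (SSP) entails reflexivity of $\pc$, since instantiating $b := a$ in SSP would otherwise produce a $z$ with $\pc za$ and $\neg\pc za$.

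For the first item I would argue straight from reflexivity. Assume $B \subseteq \pc(a)$ and $\mathfrak{F}aB$, i.e.\ $\oc(a) = \bigcup\{\oc(x) : x \in B\}$. The hypothesis $B \subseteq \pc(a)$ is already the first half of $\Sigma aB$, so it remains to show $\pc(a) \subseteq \bigcup\{\oc(x) : x \in B\}$; but for $y$ with $\pc ya$, reflexivity gives $\pc yy$, hence $\oc ya$, hence $y \in \oc(a)$, and the fusion equation places $y$ in $\oc(x)$ for some $x \in B$. That is exactly $\Sigma aB$, and only reflexivity was used.

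For the second item, the implication \emph{every sum is a fusion} needs only transitivity. Given $\Sigma aB$ I would establish $\oc(a) = \bigcup\{\oc(x):x\in B\}$ by two inclusions, each time picking a common-part witness and propagating it by transitivity: if $y \in \oc(a)$, a common part $z$ of $y$ and $a$ lies in $\pc(a) \subseteq \bigcup\{\oc(x):x\in B\}$, so $z$ has a common part $w$ with some $x \in B$, and $\pc wz \,\&\, \pc zy$ yields $\pc wy$, whence $\oc yx$; conversely, if $y \in \oc(x)$ with $x \in B$, a common part $w$ of $y$ and $x$ together with $\pc xa$ (from $B \subseteq \pc(a)$) gives $\pc wa$, whence $\oc ya$. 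The converse \emph{every fusion is a sum} decomposes the same way: the inclusion $\pc(a) \subseteq \bigcup\{\oc(x):x\in B\}$ comes out exactly as in the first item using only reflexivity (which SSP supplies), and the remaining inclusion $B \subseteq \pc(a)$ is the one that truly invokes separativity. For $x \in B$ I would suppose $\neg\pc xa$ and apply SSP to obtain a witness $z$ with $\pc zx$ that is separated from $a$; since $\pc zx$ and reflexivity put $z \in \oc(x) \subseteq \oc(a)$, $z$ must overlap $a$, which is the sought contradiction. The third item is then just the case $B = \{b, c\}$ of \emph{every fusion is a sum}, so I would record it as an immediate corollary.

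I expect the single genuinely delicate step to be the extraction of $\pc xa$ from the purely overlap-level information $\oc(x) \subseteq \oc(a)$ in the converse of the second item. The subtlety is to reconcile the form of SSP in use, $\neg\pc ab \longrightarrow (\exists z)(\pc za \,\&\, \neg\pc zb \,\&\, \neg\pc bz)$, with what the contradiction actually demands, namely a part of $x$ sharing no part with $a$; bridging this amounts to re-deriving, under transitivity, the disjointness form of strong supplementation. Everything else is routine witness-chasing and bookkeeping of which axioms are in force.
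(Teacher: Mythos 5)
First, a point of reference: the paper itself gives no proof of this theorem --- it is imported from \cite{rafal2013} (the appendix proofs concern Theorems 2 and 3 only) --- so your attempt can only be judged on its own merits and against the standard argument in that source. On that score, your first item, your sum-to-fusion half of the second item (which indeed uses only transitivity to propagate common-part witnesses), and the reduction of the third item to the binary case of the second are all correct and essentially the standard proofs.

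The genuine gap is exactly the step you flag at the end, and it cannot be bridged in the way you hope. For the fusion-to-sum direction you must refute $\neg\pc xa$ for $x\in B$, and your contradiction needs a part $z$ of $x$ \emph{disjoint} from $a$; the SSP as printed in the paper only supplies $\pc zx\,\&\,\neg\pc za\,\&\,\neg\pc az$, i.e.\ incomparability, which is perfectly compatible with $z$ overlapping $a$. Moreover, the printed form together with reflexivity and transitivity does \emph{not} re-derive the disjointness form, and the second bullet is actually false under the printed reading. Concretely, let the domain be the infinite subsets of $\mathbb{N}$ with $\pc$ being inclusion, so that $\oc xy$ holds iff $x\cap y$ is infinite. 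The printed SSP holds there: given $x\not\subseteq y$, either $y\not\subseteq x$ and $z=x$ is an incomparable part of $x$, or $y\subseteq x$ and $z=x\setminus\{q\}$ for any $q\in y$ works. Yet with $a=\mathbb{N}\setminus\{0\}$ and $B=\{\mathbb{N}\}$ one has $\oc(a)=\oc(\mathbb{N})=$ the whole domain, so $\mathfrak{F}aB$ holds while $\Sigma aB$ fails because $\mathbb{N}\not\subseteq a$; no part of $a$ is disjoint from $\mathbb{N}$, which is precisely where the disjointness form breaks down. So the repair is not a derivation but a correction of the hypothesis: ``separative'' in \cite{rafal2013} means $\neg\pc ab\longrightarrow(\exists z)(\pc za\,\&\,\neg\oc zb)$ (the paper's rendering of SSP is best read as a misprint of this), and with that reading your witness-chasing for the fusion-to-sum direction closes exactly as you sketched it. A final small caveat: your derivation of reflexivity from SSP exploits the contradictory consequent of the printed form; under the disjointness form reflexivity does not follow and must be assumed, as it is in the cited source where parthood is a partial order.
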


\begin{example}{Fusion and Decisions}\label{progress}
Let $S= \{a,\, b,\, c,\, e, \, f \}$ be a set with parthood $\pc$ defined as the reflexive completion of  
\[ \{(a, c)\, (b, c), \, (a, e),\, (b, e) \}. \]
 
 If $K= \{a,\, b,\, c,\, e\}$, then $\mathfrak{F}cK$ and $\mathfrak{F}eK$ hold. But, $UB(K) = \emptyset$.
 
 Suppose $S$ represents the respective diagnosis of five doctor teams $X$, $W$, $Z$, $E$, and $F$,
 on the basis of diagnostic information indicated in the decision table below. Consider columns $1$, $4$ and $6$ alone first. The sixth column indicates the team type (based on the best performing doctor in the team) involved in the diagnosis. Assume that the doctors are essentially lower approximating an \emph{ideal diagnosis} and that $\pc  \beta \alpha$ means '$\alpha$ is a better diagnosis than $\beta$'.

\begin{table}[hbt]
 \centering
\begin{tabular}{llllllll}
\toprule
\textsf{Doctors} &\textsf{Attribute:1--3} &\textsf{Attributes:4--6} &\textsf{Attributes: 7--9} &\textsf{Diagnosis} & \textsf{Remark} & l & u   \\
\midrule
$X$  &smm & www & nnw & $a$ & General & $X$ & $Z$\\
\midrule
$W$  &mww & swm  &nnn &  $b$ & General & $W$ & $E$ \\
\midrule
$Z$  &smm &mwm  &wmw &   $c$ & Specialist & $Z$ & $Z$\\
\midrule
$E$  &msw  &swm & mms & $e$ & Specialist & $W$ & $E$\\
\midrule
$F$  &mss & mwm &mws &   $f$ & Specialist & $F$ & $F$\\
\end{tabular}
\caption{Doctors and Diagnosis}\label{doci}
\end{table} 
 
 Mereological fusion in the context corresponds to combining expert information. It cannot be used in the context to arrive at any all encompassing ideal diagnosis.

The attributes used for the diagnosis are encoded as per: s- severe, m-moderate, w-weak, n-not available. Thus the string \textsf{smm} in the second cell is intended to mean that the valuation for attribute 1 is \textsf{s}, attribute 2 is \textsf{m} and attribute 3 is \textsf{m}. Further suppose that the attributes are potentially causally related, and that the valuations assigned by the doctors are dependent on their own perspectives. The lower and upper approximations of the teams relative to their potential in the context is indicated in the last two columns. 

It is not hard to obtain a granulation based on a simple ordering of the attribute valuation. In practice, the situation is usually more complex. A partial order on $S$ can be specified based on the training of doctor teams and a GGS can be defined on the basis of this information on $S$. It should also be easy to see that the rough inclusion function perspective in the context does not correspond to the approximations. 
\end{example}

\section{Rough Sets and Contact Relations}

In RMCA, the primitives are essentially determined by the fundamental predicates and operations of the granular operater space variant used. For a reasonable comparison with frameworks like KGEMT, it makes sense to restrict considerations to HGOS.

\emph{Two elements of a granular operator space are in rough contact if and only if they have things that share a granule in a perspective or a process} -- this loose statement is very general and can be used to define concepts of rough contact relations in the rough semantic domain, the classical semantic domain and other hybrid semantic domains. This means that a large number of such concepts are possible. In the absence of a granulation, two objects in an abstract approximation space may be said to be in \emph{rough contact} if and only if they share a definite object in some sense. 

If two objects share some ontological dependence then also they need not be in contact in a physical sense, but can be said to be in contact in a visual sense. Some work on the latter is known in the context of proximity based approaches in rough sets \cite{bid,cgjp2016}. These concepts are not necessarily granular and require much stronger conditions. 

Concepts of rough dependence have been defined and used in models and for comparison by the present author in \cite{am3930,am9501,am9411}. If the rough dependence of $a$ on $b$ is $c$, then in some sense, $a$ can be said to be in contact with $b$ - but this requires additional conditions to be satisfied by the context as modeled by the GGS. These will appear in a separate paper.

A number of new rough contact relations are introduced next.

\begin{definition}
In a GGS $\mathbb{S} \, =\, \left\langle \underline{\mathbb{S}}, \mathcal{G}, l , u, \pc, \leq , \vee,  \wedge, \bot, \top \right\rangle,$ the relation $\Re_\alpha$  is intended to indicate a \emph{rough contact relation} of type $\alpha$ as per the following definitions:
\begin{align*}
\Re_a xb \text{ iff } (\exists e , f\in S)(\exists g\in \mathcal{G}) \pc ex \,\&\, \pc f b\,\&\, \pc eg \, \&\, \pc fg    \tag{Type a}\\ 
\Re_o ab \text{ if and only if } (\exists e\in \mathcal{G})\, \pc ea \, \&\, \pc eb \tag{Type o} \\
\Re_1 ab \text{ if and only if } (\exists f) \, \pc fa \,\&\, f^u \leq b^u\, \&\, b^u \leq f^u \tag{Type 1}\\
\Re_2 ab  \text{ if and only if } (\exists e, f \in \mathcal{G} ) \, \pc ea^l \,\&\, \pc eb^u \,\&\, \pc f a^u \,\&\, \pc f b^l   \tag{Type 2} \\
\Re_3 ab  \text{ if and only if } (\exists e\in \mathcal{G}) \, \pc ea^u \,\&\, \pc e b^u \tag{Type 3} 
\end{align*}
\end{definition}

The meaning of these rough contact relations is as follows:
\begin{itemize}
\item {Two objects $x$ and $b$ are in type-a contact if and only if parts of $x$ and $b$ are in contact within a granule. This is obviously not necessarily a contamination-free concept in general because parts of granules may require knowledge about all kinds of objects.}
\item {$a$ and $b$ are in type-o contact if and only if they share a granule. This can for example, be about sharing a basic kind of 'redness'.}
\item {$a$ and $b$ are in type-1 contact if and only if there is a part of $a$ whose upper approximation coincides (relative to $\leq$) with that of $b$.}
\item {$a$ and $b$ are in type-2 contact if and only if there exists a granule $e$ that is definitely in $a$ and possibly in $b$, and a granule $f$ that is definitely in $b$ and possibly in $a$. This is just one of the possible readings of the definition - it may be wrong.}
\item {$a$ and $b$ are in type-3 contact if and only if there exists a granule $e$ that is part of the upper approximations of $a$ and $b$ respectively. If the parthood is set inclusion, then type-3 contact is a form of contact relation. This can be related to sharing a relatively basic property in a reasonable speculative perspective.}
\end{itemize}
In one perspective, these are generalizations of the idea of descriptive proximity \cite{cgjp2016} that rely on granulations instead of the restrictive probe functions used therein. The connection is illustrated by Example \ref{proxim} (adapted from \cite{am501} by the present author):

\begin{example}\label{proxim}
 
In the case study on numeric visual data including micro-fossils with the help of nearness and remoteness granules in \cite{cgjp2016}, the difference between granules and approximations is very fluid as the precision level of the former can be varied. The data set consists of values of probe functions that extract pixel data from images of micro-fossils trapped inside other media like amethyst crystals. The same strategy can be used in a number of similar visual data sets including the analysis of paintings.

The idea of remoteness granules is relative to a fixed set of nearness granules formed from upper approximations - so the approach is about reasoning with sets of objects which in turn arise from tolerance relations on a set. In \cite{am501,am9114}, the computations are extended to form maximal antichains at different levels of precision towards working out the best antichains from the point of view of classification.

Let $X$ be an object space consisting of representation of some patterns and $\Phi : X\longmapsto \mathbb{R}^n$ be a \emph{probe function}\index{Function!Probe}, defined by 
$\Phi (x)\, =\,(\phi_1 (x), \phi_2 (x), \ldots , \phi_n (x)),$
where $\phi_i (x)$ is intended as a measurement of the $i$th component in the feature space $\Im (\Phi)$. The concept of descriptive intersection of sets permits migration from classical ideas of proximity to ones based on descriptions. A subset $A\subseteq X$'s descriptive intersection with subset $B\subseteq X$ is defined by $A\cap_\Phi B\, =\,\{x\in A\cup B : \Phi(x) \in \Phi(A) \,\&\,\Phi (x) \in \Phi (B) \}$.
$A$ is then \emph{descriptively near}\index{Descriptively Near} $B$ if and only if their descriptive intersection is nonempty. Peter's version of proximity $\pi_\Phi$ is defined by $ A \pi_\Phi B \leftrightarrow \Phi(A)\cap \Phi(B) \neq \emptyset $. In \cite{cgjp2016}, weaker implications for defining \emph{descriptive nearness} are considered : 
$A\cap_\Phi B\neq \emptyset \rightarrow A\delta_\Phi B$. Specifically, if $\delta$ is a proximity on $R^n$, then a descriptive proximity $\delta_{\Phi}$ is definable via $A\delta_\Phi B \leftrightarrow \Phi(A)\delta \Phi (B)$.

 In these contexts, the concepts of rough contact can be used for greater flexibility in application. Specifically when the intended interpretation of a picture has to do with complex geometry, and pixel values do not suffice, then it would be better to use rough contact relations for knowledge representation. 
 
\end{example}

\begin{definition}
A higher general granular operator space $S$ will be said to be \emph{tractable} if and only if 
$(\forall x\in S)(\exists b\in \mathcal{G} )\, \pc ab \vee \pc b a$ ({Tractability}).
\end{definition}

\begin{theorem} All of the following hold:

$\Re_a$ is a symmetric relation that is not reflexive or  extensional or transitive in general.

If $S$ is tractable, then $\Re_a$ is reflexive.

$\Re_o,\, \Re_2, $ are symmetric relations that are not reflexive or  extensional or transitive in general.

$\Re_1$ is not symmetric or transitive, but is reflexive.

$\Re_3$ is reflexive if $S$ has no zero element or if the bottom element is a granule.
\end{theorem}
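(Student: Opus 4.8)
\medskip
\noindent\textbf{Proof plan.} I would first dispose of the positive claims, which are essentially syntactic and use only that parthood $\pc$ is reflexive (automatic in every \textsf{GS}, hence in the \textsf{HGOS} the discussion is aimed at). Symmetry of $\Re_a$ is immediate, its defining matrix being invariant under the simultaneous renaming $(x,b,e,f)\mapsto(b,x,f,e)$; $\Re_o$ is literally symmetric in its two arguments; and for $\Re_2$ the renaming $(e,f)\mapsto(f,e)$ carries a witness for $\Re_2 ab$ to a witness for $\Re_2 ba$. For reflexivity of $\Re_1$, take $f=a$ in its definition: $\pc aa$ holds and $a^u\leq a^u$, so $\Re_1 aa$. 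For the tractability clause, given $x$ pick a granule $b$ with $\pc bx$ or $\pc xb$; in the first case $e=f=g=b$ witnesses $\Re_a xx$, and in the second $e=f=x$ with $g=b$ does. For $\Re_3$, note $\Re_3 aa$ reduces to $(\exists g\in\mathcal G)\,\pc g a^u$; as $\pc\bot y$ holds for all $y$ by a \textsf{GGS} axiom, $g=\bot$ works whenever $\bot\in\mathcal G$, and the ``no zero element'' alternative is the same observation read contrapositively, a bottom-type element being the only place $\Re_3$-reflexivity can fail.

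For the negative clauses the plan is to produce small finite \textsf{GGS} (built in the spirit of Example~\ref{progress}, or adapted from the author's other constructions) realising each failure. For $\Re_1$: since $\pc ba$ already forces $\Re_1 ab$ (take $f=b$), it suffices to use a structure with $\bot<b<a$, $a$ definite and $b\neq a$, in which the only parts of $b$ are $b$ and $\bot$; then $\Re_1 ab$ holds while $\Re_1 ba$ fails, the upper approximations of the parts of $b$ being just $b^u$ and $\bot$, neither of which is $a^u$. Non-transitivity of $\Re_1$ comes from a three-element configuration in which some part of $a$ matches $b^u$ and some part of $b$ matches $c^u$ but no part of $a$ matches $c^u$, also using that $\pc$ need not be transitive in a \textsf{GGS}. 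For the symmetric relations $\Re_o,\Re_2$ (and, analogously, $\Re_a$): non-reflexivity is obtained by arranging that the relevant granule-set of the test element is empty --- no non-bottom granule below $a$ for $\Re_o$, none below $a^l$ for $\Re_2$ --- which one does by placing the granules incomparable to or above the test element, with $\bot$ serving as the represented lower approximation via a meet-type representing term; non-extensionality is obtained from two distinct elements with identical such granule-sets (for $\Re_o$, two incomparable elements above exactly one common granule $g$ give $\{c:\Re_o ca\}=\{c:\pc gc\}=\{c:\Re_o cb\}$); and non-transitivity is obtained from a three-link configuration whose consecutive pairs share a granule of the appropriate kind while the outer pair does not.

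The real obstacle is not spotting these configurations but certifying that the finite structures are genuine \textsf{GGS}, i.e.\ that the admissibility conditions hold: \textsf{WRA} (a \emph{single} term built from the weak lattice operations must realise every lower and every upper approximation as a value on granules), \textsf{LS} ($\pc ax\Rightarrow\pc ax^l$ for granules $a$), and \textsf{FU} (any two distinct granules lie properly below a common definite element). \textsf{WRA} in particular pushes against the sparse, high-sitting granulations the non-reflexivity witnesses want, so one must pad the carrier with enough definite elements and choose the representing term (a bounded join of granule arguments, or a meet when $\bot$ must be reached) with care. A further \textsf{GGS}-specific subtlety is that the universal part $\bot$ can silently force the granule-based relations $\Re_a,\Re_o,\Re_2,\Re_3$ to hold everywhere, so each countermodel must keep $\bot\notin\mathcal G$ and keep any common-granule witness from being supplied by $\bot$. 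Once admissibility and this bottom-behaviour are under control, verifying each listed property in the model is a finite, mechanical check; in practice it is cleanest to start from an already-verified finite \textsf{GGS} and tune its granulation and approximation operators rather than to build one from scratch.
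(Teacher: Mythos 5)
Your handling of the positive clauses (symmetry via the syntactic shape of the defining conditions, $f=a$ for reflexivity of $\Re_1$, explicit witnesses for the tractability clause, $\bot\in\mathcal{G}$ for $\Re_3$) matches, and in places fills in, the paper's own terse argument. Your non-symmetry configuration for $\Re_1$ ($\bot<b<a$ with $a$ definite and $b$ having only $b,\bot$ as parts) is a legitimate alternative to the paper's, which instead takes $b\subset b^u=g_1\cup g_2$ and $a^u=b^u\cup g_3$ inside a classical approximation space. That choice of venue is the main methodological difference: by staging every counterexample in classical rough sets (a set HGOS over a partition) the paper gets admissibility of the granulation for free, whereas your plan builds bespoke finite \textsf{GGS} and correctly identifies WRA/LS/FU certification as ``the real obstacle'' --- but then leaves it, and with it all of the non-reflexivity, non-extensionality and non-transitivity clauses for $\Re_a,\Re_o,\Re_2$ (and non-transitivity of $\Re_1$), as an unexecuted program. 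Since for these clauses the countermodels \emph{are} the content, the proposal is incomplete exactly where the work lies; the cheaper route is the paper's, namely to realize the counterexamples inside $\wp(S)$ with the partition as granulation so that no certification is needed.

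There is also a step that would fail as stated: treating $\Re_a$ ``analogously'' to $\Re_o,\Re_2$ for non-reflexivity. In $\Re_a$ the witnesses $e,f$ range over the whole carrier, and every \textsf{GGS} satisfies $\pc \bot a$ for all $a$; hence $e=f=\bot$ together with any granule $g$ (for which $\pc \bot g$ again holds) witnesses $\Re_a aa$ whenever $\mathcal{G}\neq\emptyset$, regardless of whether $\bot\in\mathcal{G}$. So your own remedy --- keep the witness from being supplied by $\bot$ --- works for $\Re_o$ and $\Re_2$, where the witness must itself be a granule, but not for $\Re_a$; the failure of reflexivity for $\Re_a$ has to be located in degenerate situations (the paper's ``because of empty sets''), for instance an empty granulation or a reading on which bottom-like elements are not admitted as parts, and this needs to be argued explicitly rather than by analogy. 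Finally, the ``no zero element'' half of the $\Re_3$ clause is only gestured at (``the same observation read contrapositively''); a complete argument must say why, absent a zero, every $a^u$ has some granule as a part (via WRA or the set-theoretic reading), which your sketch --- and, admittedly, the paper itself --- does not spell out.
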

\begin{proof}
Each of the defining conditions (of $\Re_i a b$ for $i\in \{a, o, 2, 3\}$) is of the form \[(\exists e f)(\exists h \in \mathcal{G}) \Phi(x, b, e, f, h) \,\&\, \Phi(b, x, e, f, h)\]  or $(\exists e f) \Phi(x, b, e, f) \,\&\, \Phi(b, x, e, f)$ for some formula function $\Phi$ and with some of the variables being possibly identical or missing. Therefore $\Re_i$ is symmetric. Reflexivity fails for $\Re_a$ because of empty sets.

$\Re_1$ fails to be symmetric in the context of classical rough sets. Let $ b\subset b^u = g_1\cup g_2$ for some distinct non singleton granules $g_1, g_2$, and $a^u = b^u \cup g_3$ for a distinct non singleton granule $g_3$. If it is also the case that $a \subset a^u$, then it is easy to find a suitable $a$ for which $\Re_1 ab$ and $\neg \Re_1 ba$ holds.
\end{proof}

The way in which $\Re_2$ fails to be reflexive motivates the following definition

\begin{definition}
By a \emph{granule aware} element $x$ of a higher granular operator space $S$ will be one that satisfies 
$(\exists a \in \mathcal{G})\, \pc a x$. The set of such elements will be denoted by $S_o$ and any reflexive binary relation on $S_o$ will be said to be a  \emph{granule aware} relation.
\end{definition}

\begin{proposition}
$\Re_2$ and $\Re_o$ are granule aware relations.
\end{proposition}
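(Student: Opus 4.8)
The plan is to unwind the phrase \emph{granule aware relation}: by Definition of $S_o$ and the stipulation that a granule aware relation is just a reflexive binary relation on $S_o$, it suffices to check that $\Re_o$ and $\Re_2$, restricted to $S_o$, are reflexive. Their symmetry (already recorded in the preceding theorem) plays no role here. Throughout I use that $x\in S_o$ means $(\exists c\in\mathcal{G})\,\pc cx$, and that in an HGOS the parthood $\pc$ coincides with the partial order $\le$, hence is reflexive and transitive, and that granules are lower definite (from $\pc a^l a$, reflexivity of $\pc$, and Lower Stability, one gets $a^l=a$ for $a\in\mathcal{G}$).

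For $\Re_o$ the claim is immediate: given $x\in S_o$, take the witness $c\in\mathcal{G}$ with $\pc cx$; then $\pc cx\,\&\,\pc cx$ witnesses $\Re_o xx$, so $\Re_o\!\restriction S_o$ is reflexive. For $\Re_2$, substituting $a=b=x$ in its defining formula, reflexivity of $\Re_2\!\restriction S_o$ reduces to producing a \emph{single} granule $g$ with $\pc gx^l$ and $\pc gx^u$, and then taking $e=f=g$. First I would take the witness $c\in\mathcal{G}$ with $\pc cx$ and apply Lower Stability to obtain $\pc cx^l$, so that $c$ is a granule below $x^l$; the remaining task is to place a granule (the same $c$ will do) below $x^u$ as well.

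The main obstacle is exactly this last step, i.e.\ relating $x^l$ and $x^u$. I would discharge it by establishing $\pc x^l x^u$ and then using transitivity of $\pc$ to conclude $\pc cx^u$ from $\pc cx^l$. Care is needed because the stated approximation axioms deliver only $\pc a^l a$ and $\pc a^u a^{uu}$, together with monotonicity of $l,u$ under $\pc$, and not $\pc aa^u$ directly; so this step must either take the standard ordering $\pc a^l a^u$ of the approximations as an input, or be extracted from the HGOS structure and the admissibility clauses — e.g.\ by representing $x^l$ and $x^u$ through granules via Weak RA and exploiting lower-definiteness of granules and Lower Stability. Granting $\pc x^l x^u$, the $\Re_2$ case closes as above, and combined with the $\Re_o$ case this yields the proposition.
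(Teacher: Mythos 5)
Your reading of the definition is correct, and the $\Re_o$ half of your argument is complete: for $x\in S_o$ the witnessing granule $c$ with $\pc c x$ itself witnesses $\Re_o xx$. Note that the paper states this proposition without any proof, so there is no official argument to compare against; your computation for $\Re_o$ is surely the intended one.

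The $\Re_2$ half, however, is left genuinely open, and the gap you flag cannot be closed along the fallback route you suggest. From $x\in S_o$ and Lower Stability you do get a granule $c$ with $\pc c x^l$, but the remaining step $\pc c x^u$ really does require something like $\pc x x^u$ (or $\pc x^l x^u$), and this is not derivable from the GGS/HGOS axioms as stated: they only give $\pc a^l a$, $a^{ll}=a^l$, $\pc a^u a^{uu}$, monotonicity of $l$ and $u$, the $\bot$/$\top$ conditions, and WRA, LS, FU. Representing $x^u$ through granules via Weak RA does not help, since WRA merely says $x^u$ is some lattice term in granules, which need not lie above any granule that sits below $x^l$. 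Concretely, take the set HGOS on $\wp(\{1,2,3,4\})$ with $\mathcal{G}=\{\{1,2\},\{3,4\}\}$, $l$ the usual union of contained granules, and $u$ given by $\emptyset^u=\emptyset$, $\top^u=\top$, and $y^u=\{3,4\}$ for every other $y$; all the GGS axioms and WRA, LS, FU hold, the element $x=\{1,2\}$ is granule aware, yet $x^l=\{1,2\}$ and $x^u=\{3,4\}$ have no common granule as part, so $\Re_2 xx$ fails. Thus your phrase \emph{granting $\pc x^l x^u$} marks an additional hypothesis rather than a dischargeable step: that inclusion does hold in classical RST and in the CAS setting of Theorem \ref{cas1} (where $x\subseteq x^u$), and with it your argument closes exactly as you describe; without it the $\Re_2$ claim does not follow from the stated axioms. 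Make the assumption $\pc x x^u$ (or $\pc x^l x^u$) explicit and your proof is fine.
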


Based on each of these rough contact relations, additional relations like \emph{roughly-disconnected, externally roughly connected, and tangential proper rough part} can be defined as is done in case of contact relations. \emph{These concepts have a clear descriptive perspective in the contexts of Example \ref{proxim}. Further, they permit a correspondence with spatial mereology based on KGEMT}.

The defined rough contact relations have a number of properties in the context of CAS:  
\begin{theorem}\label{cas1}
 Let $S\,=\, \left\langle\underline{S}, \mathcal{S} \right\rangle$ be a CAS with $\bigcup \mathcal{S} = \underline{S}$.  If $l$ and $u$ are any lower and upper approximation that satisfy the same conditions as that of granular operator spaces, $\subseteq$ being the parthood relation, and if $\mathcal{S}$ is an admissible granulation, then

$\Re_a$ is a contact relation.

$\Re_o$ satisfies C1, C2, and  C3 but C4 and C5 may not hold, while $\Re_1$ does not satisfy C1, C2, C4 and C5 in general.

$\Re_2$ satisfies C1, C2 and C3 but C4 and C5 do not hold in general, while $\Re_3$ satisfies C1, C2 and C3.
\end{theorem}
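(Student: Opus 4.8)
The plan is to verify C1--C5 axiom by axiom, one relation at a time, using that in the CAS setting parthood is $\subseteq$, aggregation and co-aggregation are $\cup$ and $\cap$, the bottom is $\varnothing$, the granulation $\mathcal S$ covers $\underline S$, and --- crucially --- the operators $l,u$ are $\subseteq$-monotone and send $\varnothing$ to $\varnothing$ (these are GGS axioms, so they hold for \emph{any} admissible $l,u$, not just the standard covering approximations). Throughout, the witness objects $e,f$ in the existential definitions are read as nonzero (a point, or equivalently a nonempty subset lying inside a granule), and granules are taken nonempty; this is precisely the convention behind the earlier remark that $\Re_a$ fails reflexivity ``because of empty sets''.

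For $\Re_a$ I would establish all five axioms. C2 is immediate since the defining formula is symmetric in the two arguments, and C3 follows because the second argument is constrained only through a $\subseteq$-hypothesis that survives enlargement, so the same witnesses carry over (combined with C2, monotonicity holds on both sides). C1 holds because a nonzero part of $x$ forces $x\neq\varnothing$. For C5, if $p\in a\cap b$ then, using $\bigcup\mathcal S=\underline S$, pick a granule $g\ni p$; the triple $(\{p\},\{p\},g)$ witnesses $\Re_a ab$. For C4, from a witness $(e,f,g)$ for $\Re_a a(b\vee e')$ choose a point $q$ of $f$: it lies in $b$ or in $e'$, and $\{q\}\subseteq g$ while $e\subseteq g$ is still a part of $a$, so $(e,\{q\},g)$ witnesses $\Re_a ab$ or $\Re_a ae'$. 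Hence $\Re_a$ is a contact relation.

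For $\Re_o,\Re_2,\Re_3$ the positive clauses go the same way: C2 from the symmetry of each definition (for $\Re_2$, swap the two granule-witnesses $e,f$); C3 from transitivity of $\subseteq$ together with monotonicity of $l,u$, since enlarging $b$ only enlarges $b^l,b^u$; and C1 because a nonempty granule below $a$, $a^l$, or $a^u$ forces $a\neq\varnothing$ via $\varnothing^l=\varnothing^u=\varnothing$. For the failures I would give small counterexamples, and these can already be taken inside classical partition-based rough sets (partitions are always admissible granulations in the sense of Definition~\ref{gfsg}, which is why I lean on them): C5 fails for $\Re_o$ because two sets can meet in a point without a granule lying inside both, and for $\Re_2$ because $a^l$ can contain no granule at all; C4 fails for $\Re_o$ and $\Re_2$ because a granule can lie inside $b\vee e'$ while lying inside neither $b$ nor $e'$ (for $\Re_2$ one also arranges $b^l=e'^l=\varnothing$); for $\Re_1$, the element $\varnothing$ defeats C1 (one checks $\Re_1\varnothing\varnothing$ holds), the asymmetry construction of the previous theorem --- a part of $a$ can have the same upper approximation as $b$ while no part of $b$ reaches a strictly larger $a^u$ --- defeats C2, and analogous small partitions defeat C4 and C5. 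Finally $\Re_3$ requires only C1--C3, all handled above.

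The main obstacle is not any individual verification but keeping the bookkeeping honest: one must confirm that every small structure used for a negative claim really is an admissible granulation (partitions are, which settles this), and one must be scrupulously consistent in treating $\varnothing$ as not a genuine part and granules as nonempty --- drop that convention and C1 fails even for $\Re_a$, and the whole classification collapses.
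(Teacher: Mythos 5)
Your overall strategy is the same as the paper's (axiom-by-axiom verification with $\subseteq$ as parthood, plus small counterexamples in the partition-based classical setting), and your positive verifications for $\Re_a$, $\Re_o$, $\Re_2$, $\Re_3$ (C1 via nonemptiness and $\varnothing^l=\varnothing^u=\varnothing$, C2 via symmetry of the defining formulas, C3 via monotonicity, C4/C5 for $\Re_a$ via singletons and the covering property) are correct and in fact more careful than the paper's one-line justifications. The failures of C4 and C5 for $\Re_o$ and $\Re_2$ are also handled in the same way as the paper (the paper gives the explicit $g=\{1,2,3\}$, $b\cap g=\{1,2\}$, $c\cap g=\{3\}$ example for $\Re_o$).

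The genuine gaps are in your $\Re_1$ clause. First, your C1-failure argument contradicts your own convention: you (rightly) insist that existential witnesses are nonzero --- without that, $\Re_a\varnothing\varnothing$ holds via $e=f=\varnothing$ and C1 fails for $\Re_a$ too --- but then you defeat C1 for $\Re_1$ by checking $\Re_1\varnothing\varnothing$, which needs the witness $f=\varnothing$. Under a single uniform convention one of these two claims breaks: with empty witnesses allowed, $\Re_a$ is not a contact relation; with them excluded, in the classical setting any nonzero $f\subseteq a$ with $f^u=b^u$ forces both $a$ and $b$ nonempty, so C1 holds for $\Re_1$ there (a failure then requires a nonstandard $u$ with $f^u=\varnothing$ for some $f\neq\varnothing$, which the GGS axioms do permit but a partition model does not). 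You need to say explicitly which convention you use where, rather than switching silently. Second, your claim that ``analogous small partitions defeat C4'' for $\Re_1$ is false: in the partition setting, if $f\subseteq a$ has $f^u=(b\cup c)^u$, then every equivalence class meeting $b$ meets $f\subseteq a$, so $a\cap b^u$ is a nonempty part of $a$ whose upper approximation is exactly $b^u$ (and similarly for $c$); hence C4 holds for $\Re_1$ over partitions, and a counterexample needs a proper covering granulation (e.g.\ granules $\{1,2\},\{2,3\}$ with $a=\{2\}$, $b=\{1\}$, $c=\{3\}$) or nonstandard approximation operators. The paper's appendix makes the same unsubstantiated ``classical context'' claim for $\Re_1$, so you are in good company, but as a proof your treatment of $\Re_1$ does not go through as written; your C2 and C5 counterexamples for $\Re_1$, by contrast, are fine.
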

\begin{proof}
The definition of $\Re_a$ is very mild. Because every subset of the set $S$ includes subsets that must necessarily be contained in a granule, C1, C4 and C5 follow. This happens because the granulation is a proper cover.

$\Re_o$ does not satisfy C4 even in the context of classical rough sets. To see this, let $\mathcal{S}$ be a cover including the granulation (a partition), $g = \{1, 2, 3\}$ be the only granule contained in the set $a$, $b\cap g = \{1, 2\}$, and $c\cap g = \{3\}$. Then $\Re_oa (b\cup c)$ holds, but both $\Re_o ab$ and $\Re_o ac$ do not hold.
In this situation $\emptyset \subset a\cap b$, but $\Re_o ab$ is not valid in the model. So C5 is also not generally true.

\end{proof}

\paragraph{Conclusion} In this research, concepts of rough contact relations have been introduced in an accessible way in the context of a very general approach to rough sets by the present author. A comprehensive and specialized  version of this research will appear separately.   

\bibliographystyle{plain}
\bibliography{algroughffff}
\appendix
\section{Proof of Theorem 2}

\begin{proof}
\begin{itemize}
\item {Each of the defining conditions (of $\Re_i a b$ for $i\in \{a, o, 2, 3\}$) is of the form \[(\exists e f)(\exists h \in \mathcal{G}) \Phi(x, b, e, f, h) \,\&\, \Phi(b, x, e, f, h)\]  or $(\exists e f) \Phi(x, b, e, f) \,\&\, \Phi(b, x, e, f)$ for some formula function $\Phi$ and with some of the variables being possibly identical or missing. Therefore $\Re_i$ is symmetric. Reflexivity fails for $\Re_a$ because of empty sets, while transitivity fails for better reasons.}
\item {$\Re_1$ fails to be symmetric in the context of classical rough sets. Let $ b\subset b^u = g_1\cup g_2$ for some distinct non singleton granules $g_1, g_2$, and $a^u = b^u \cup g_3$ for a distinct non singleton granule $g_3$. If it is also the case that $a \subset a^u$, then it is easy to find a suitable $a$ for which $\Re_1 ab$ and $\neg \Re_1 ba$ holds.}
\end{itemize}

\end{proof}
\section{Proof of Theorem 3}
\begin{proof}
\begin{itemize}
\item {The definition of $\Re_a$ is very mild. Because every subset of the set $S$ includes subsets that must necessarily be contained in a granule, C1, C4 and C5 follow. This happens because the granulation is a proper cover.}
\item {$\Re_o$ does not satisfy C4 even in the context of classical rough sets. To see this,
\begin{itemize}
\item {let $\mathcal{S}$ be a cover including the granulation (a partition) and sets $a, b, c$,}
\item {$g = \{1, 2, 3\}$ be the only granule contained in the set $a$,}
\item {$b\cap g = \{1, 2\}$, and}
\item {$c\cap g = \{3\}$.}
\item {Then $\Re_oa (b\cup c)$ holds, but both $\Re_o ab$ and $\Re_o ac$ do not hold.}
\item {In this situation $\emptyset \subset a\cap b$, but $\Re_o ab$ is not valid in the model. So C5 is also not generally true.}
\end{itemize}}
\item {For $\Re_1$, the required counterexamples can again be generated in the classical context.}
\item {$\Re_2 ab$ implies that the lower approximations of $a$ and $b$ contain granules. So $a$ and $b$ must be nonempty. That is C1 is satisfied by $\Re_2$. C2 and C3 are easy to verify. The counterexamples for C4 and C5 can be generated in the classical context.}
\item {For $\Re_3$ the verification is direct.}
\end{itemize}
\end{proof}
\end{document}